\theoremstyle{plain}
\newtheorem{thm}{\protect\theoremname}
\theoremstyle{definition}
\newtheorem{defn}[thm]{\protect\definitionname}
\theoremstyle{plain}
\theoremstyle{plain}
\newtheorem{prop}[thm]{\protect\propositionname}
\theoremstyle{definition}
\theoremstyle{remark}
\author{Farzad Farnoud (Hassanzadeh), Behrouz Touri, and Olgica Milenkovic\\ 
University of Illinois, Urbana-Champaign, IL\\
{\small E-mail: \{hassanz1, touri1, milenkov\}@illinois.edu}}
\date{}
\providecommand{\definitionname}{Definition}
\providecommand{\examplename}{Example}
\providecommand{\lemmaname}{Lemma}
\providecommand{\propositionname}{Proposition}
\providecommand{\remarkname}{Remark}
\providecommand{\theoremname}{Theorem}
\begin{document}
\global\long\def\dist{\mathsf{d}}
\global\long\def\kdist{\mathsf{d}_{\tau}}
\global\long\def\Kist{\mathsf{d}_{K}}
\global\long\def\Twist{\mathsf{d}_{T}}
\global\long\def\define{:\,=}
\global\long\def\dash{\mbox{--}}
\global\long\def\adjust{A}
\global\long\def\lineset{L}

\title{\vspace{10pt}Nonuniform Vote Aggregation Algorithms}
\maketitle
\begin{abstract}
We consider the problem of non-uniform vote aggregation, and in particular, the algorithmic aspects
associated with the aggregation process. For a novel class of weighted distance measures on votes,
we present two different aggregation methods. The first algorithm
is based on approximating the weighted distance measure by Spearman's footrule distance, with
provable constant approximation guarantees. The second algorithm is based
on a non-uniform Markov chain method inspired by PageRank, for which currently only heuristic guarantees
are known. We illustrate the performance of the proposed algorithms on a number of distance measures for which
the optimal solution may be easily computed.
 \end{abstract}

\section{Introduction}

Vote (rank) aggregation has a long history, dating back to the first democratic elections held in the
polis of Athens under Solon and Cleisthenis~\cite{Sinclair-democracy-1988}. The early voting process involved ranking
two candidates, so that  the problem of determining the winner reduced to simple plurality vote counts.
In more recent political history, it was recognized that plurality
methods, as well as majority pairwise counts for multiple candidate ranking systems are plagued by a number of issues. These issues were most
succinctly identified by de Borda and Condorcet~\cite{mueller2008public}, and pertain to the fact that votes may be
\emph{non-transitive}, that ``strong candidates'' may loose out to ``weak candidates'' due
to their close mutual competition, and that majority pairwise counts may differ substantially from plurality counts.

The above described issues with aggregating multiple votes/rankings led to a line of work centered around the
use of \emph{distance measures} between rankings~\cite{deza2009encyclopedia}
and an underlying axiomatic approach~\cite{arrow1963social}.
The idea behind the distance-based approach is that one can cast the aggregation problem as one of evaluating the median of
a set of points in a given metric space. Well-known metrics used for computing the median
include Kendall's $\tau$ and Spearman's footrule \cite{diaconis1988group}.

One of the drawbacks of aforementioned distance-based aggregation methods is that the distance functions
do not cater to the need of certain applications where similar items are to be treated similarly in the aggregation
process, and where the top vs. the bottom of the list carry different relevance in the ranking~\cite{kumar2010gdr}.
An example for the first scenario may be in ranking candidates for a number of positions, with the constraint that some candidate diversity criteria are met.
An example for the second scenario may be in ranking candidates where only a small fraction at the top is considered
for a position.

In two companion papers~\cite{farnoud2012sorting,cdc-farnoud}, we studied a very general class of distance measures, termed \emph{cost-constrained
transposition distances}, or \emph{weighted transposition distance}, which, among other things, address the two aggregation issues described above. The crux
of this new approach to measuring distances between rankings is to assign non-uniform swapping costs (weights) to different
pairs of locations in the list, or equivalently, different pairs of elements in the inverse list. Aggregation
methods based on weighted transpositions are currently unknown, and the topic of interest in this paper.

The results we present pertain to algorithmic aspects of the weighted vote/rank aggregation problem~\cite{kemeney1959mathematics,cook1985ordinal,sculley2007rank,schalekamp2009rank}.
We describe three algorithms: a constant-approximation algorithm that uses an analytical bound between the weighted
distance and a generalization of Spearman's footrule,
and then solves a minimum weight matching problem; this algorithm is inspired by a procedure described in~\cite{dwork2001rank,dwork2001rank-web};
an aggregation method reminiscent of PageRank~\cite{dwork2001rank}, where the ``hyperlink probabilities'' are chosen according to the
swapping weights; and a combination of the first algorithm with local descent methods.

The paper is organized as follows. A brief introduction to vote aggregation and the problem formulation are given in Section~\ref{sec:preliminaries}.
The main contribution of the paper is presented in Section~\ref{sec:main-results}, which contains the proposed aggregation algorithms.
Results of various rank aggregation processes on an Academic Climate Study dataset gathered at UIUC are presented in Section~\ref{sec:results}.

\section{Preliminaries}\label{sec:preliminaries}

Suppose that an election process includes $m$ voters, each of which provides a ranking of $n$ candidates. These rankings are
collected in a set $\Sigma=\{\sigma_{1},\sigma_{2},\cdots,\sigma_{m}\}$, where each ranking $\sigma_i$ is represented by
a permutation in $\mathbb{S}_{n}$, the symmetric group of order $n$.

Given a distance function $\dist$ over the permutations in $\mathbb{S}_{n}$,
the distance-based aggregation problem can be stated as follows. Find the ranking $\pi^*$ that minimizes the cumulative distance from $\Sigma$, i.e.,
\begin{equation} \label{eqn:rank-agg}
\pi^*=\arg\min_{\pi\in\mathbb{S}_{n}}\sum_{i=1}^{m}\dist(\pi,\sigma_{i}).
\end{equation}

Clearly, the choice of the distance function $\dist$ is an important feature of the aggregation method.
We describe next a few such distance measures, including Kendall's $\tau$ and Spearman's footrule distance~\cite{kemeney1959mathematics}.

Let $e=12\cdots n$ denote the identity permutation (ranking).
\begin{defn}
A transposition $(a\ b)$ in a permutation $\pi$ is the swap of elements in positions $a$ and $b$. When there is no confusion, we consider
a transposition to be a permutation. If $|a-b|=1$, the transposition
is referred to as an \emph{adjacent transposition.}
\end{defn}

It is well-known that any permutation may be reduced to the identity via transpositions
or adjacent transpositions. The former process is referred to as sorting,
while the later is known as sorting with adjacent transpositions.
The smallest number of  transpositions needed to sort a permutation
$\pi$ is known as the Cayley distance, $T(e,\pi)$, while the smallest number of
adjacent transpositions needed to sort a permutation is known as the Kendall's
$\tau$ distance, $K(e,\pi)$.

Let $\Theta=\{{(a\ b): \; a \neq b, \; a,b \in [n]\}}$ be the set of transpositions, and endow $\Theta$ with a non-negative weight function
$\varphi: \Theta \to \mathbb{R}^{+}$ where $\varphi(a,b)$ is the weight of transposition $(a\ b)$. The distance measure of interest is defined as the minimum weight of a sequence of transpositions needed
to transform one permutation $\pi$ into another permutation $\sigma$.
This distance measure is termed the \emph{weighted transposition distance}, and is denoted by $\dist_{\varphi}(\pi,\sigma)$~\cite{farnoud2012sorting}.
It can easily be shown that most distance measures used for rank aggregation represent special
cases of the weighted transposition distance:
\begin{itemize}
\item Kendall's $\tau$, $K(\pi,\sigma) = \dist_{\varphi_K}(\pi,\sigma)$, where $\varphi_K(i,j) = 1$, for $|i-j|=1$ and $\varphi_K(i,j) = \infty$ otherwise.
\item Spearman's footrule, $F(\pi,\sigma)$, defined as $\sum_{i=1}^n |\pi^{-1}(i)-\sigma^{-1}(i)|$, may be written as $F(\pi,\sigma) = \dist_{\varphi_F}(\pi,\sigma)$, where $\varphi_F(i,j)= |i-j|$.
\item Cayley's distance, $T(\pi,\sigma) = \dist_{\varphi_T}(\pi,\sigma),$ where $\varphi_T(i,j) = 1$ for all $i,j$.
\end{itemize}

In what follows, we focus on the \emph{weighted Kendall distance}, where $\varphi(i,i+1) =\varphi(i+1,i) = w_i$, with $w_i$ being non-negative, and $\varphi(i,j) = \infty$ for $|j-i|\neq 1$.

The weighted Kendall distance between two permutations addresses the issue of the top-versus-bottom problem as follows.
To model the significance of the top of the list versus the bottom of the list, one may choose $w_i=n-i$. This means that the weight of swapping the first and the second rank (location) is $n-1$ while the weight of swapping the $n-1^{\text{st}}$ and the $n^{\text{th}}$ rank (location) is $1$.
In this case, transposition weights decay arithmetically as we move towards the end of the list. We may also choose $w_i=c^i$ for $0 \le c< 1$. In this case, the weight decay
is geometric. Weighted transposition distance, in its general form, can be used to model similarities between elements by assigning small weights to transpositions involving similar elements and large weights to transpositions involving
dissimilar elements. Note that in this case, the weights relate to transposing \emph{elements} and not \emph{ranks}, and thus the distance between two permutations $\pi$ and $\sigma$ is defined as
$\dist_\varphi(\pi^{-1},\sigma^{-1})$.

In the next section, we describe how to perform efficient (approximate) rank aggregation using the weighted Kendall
distance. Our results are inspired by related algorithmic approaches proposed in~\cite{dwork2001rank}.

\section{Algorithmic Results} \label{sec:main-results}

Rank aggregation is a combinatorial optimization problem over the set of permutations, and as such, it is computationally costly.
Aggregation with Kendall's $\tau$ distance is known to be NP-hard \cite{dwork2001rank}. However, assuming that $\pi^*$ is the solution to \eqref{eqn:rank-agg}, the ranking
$\sigma_i$ closest to $\pi^*$ provides a 2-approximation for the rank aggregate. This easily follows from the fact that Kendall's $\tau$
satisfies the triangle inequality. As a result, one only has to evaluate the pairwise distances of the votes $\Sigma$ in order
to identify a constant approximation aggregate for the problem. Although we do not provide a detailed proof of this claim, the same is true of the weighted Kendall  distance.

\subsection{Minimum Weight Bipartite Matching Algorithms}

For any distance function that may be written as \(\dist(\pi,\sigma)=\sum_{k=1}^n f(\pi^{-1}(k),\sigma^{-1}(k)),\) where $f$ denotes an arbitrary non-negative function,
one can find an \emph{exact solution} to \eqref{eqn:rank-agg} as follows \cite{dwork2001rank}. Consider a complete weighted bipartite graph $\mathcal{G}=(X,Y)$, with $X=\{1,2,\cdots,n\}$
corresponding to the $n$ ranks to be filled and $Y=\{1,2,\cdots,n\}$ corresponding to the elements of $[n]$, i.e., the candidates.
Let $(i,j)$ denote an edge between $i \in X$ and $j \in Y$. We say that a perfect bipartite matching $P$ corresponds to a permutation $\pi$ whenever $(i,j)\in P$ if and only if
$\pi(i)=j$.
If the weight of $(i,j)$ equals
\begin{equation} \label{eq:linear}
\sum_{l=1}^m f(i,\sigma^{-1}_l(j)),
\end{equation}
i.e., the weight incurred by $\pi(i)=j$, then the minimum weight perfect matching corresponds to a solution of~\eqref{eqn:rank-agg}.

For example, if $\varphi$ is a metric path weight functions\footnote{A metric path weight is a weight function obtained by arranging the elements in [n] on a straight path and by assigning non-negative weights to the edges of the path. The weight of transposing elements in positions $a$ and $b$, $\varphi(a,b)$, is the weight of the unique path between $a$ and $b$.},
we have
\begin{equation*}\begin{split}
\dist_\varphi(\pi,\sigma)&=\sum_{j=1}^{n}\, \varphi(\pi^{-1}(j),\sigma^{-1}(j))\\
&=\sum_{j=1}^{n}\, f(\pi^{-1}(j),\sigma^{-1}(j))
\end{split}\end{equation*}
 where $f=\varphi$.

 Furthermore, note that $\sum_{k=1}^n f(\pi^{-1}(k),\sigma^{-1}(k))$ is a generalization of Spearman's footrule and thus for Spearman's footrule, one may find the exact solution as well.

Let $\mathcal{H}$ denote a complete, undirected graph with vertex set $V=[n]$. To each edge $(i,j), \, i,j \in [n]$ assign the weight $\varphi(i ,j)$.
Furthermore, let $p^*(i,j)$ denote the shortest path (i.e., minimum
weight path) between $i$ and $j$ in $\mathcal{H}$, and let ${\rm weight}(p^*)$ stand for the weight of the shortest path.
\begin{thm}
 For any non-negative
weight function $\varphi$, we have
\[
(1/2)D(\pi,\sigma)
\le \dist_\varphi(\pi,\sigma)
\le 2D(\pi,\sigma)
\]
where \[D(\pi,\sigma) = \sum_{i=1}^n {\rm{weight}} \left(p^*(\pi^{-1}(i),\sigma^{-1}(i))\right).\]
\end{thm}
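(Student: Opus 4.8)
The plan is to establish the two inequalities separately, in both cases working with the \emph{metric closure} $\bar\varphi$ of $\varphi$, defined by $\bar\varphi(a,b)={\rm weight}(p^*(a,b))$. Three facts about $\bar\varphi$ will be used throughout: $\bar\varphi\le\varphi$ pointwise; $\bar\varphi$ obeys the triangle inequality (being a shortest-path metric on $\mathcal{H}$); and $D(\pi,\sigma)=\sum_{i=1}^n\bar\varphi(\pi^{-1}(i),\sigma^{-1}(i))$ directly from the definition.

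For the lower bound $\dist_\varphi(\pi,\sigma)\ge\tfrac12 D(\pi,\sigma)$, I would run a potential-function (amortized) argument. Give each permutation $\rho$ the potential $\Phi(\rho)=\sum_{i=1}^n\bar\varphi(\rho^{-1}(i),\sigma^{-1}(i))$, so that $\Phi(\pi)=D(\pi,\sigma)$ and $\Phi(\sigma)=0$. The key observation is that applying a transposition $(a\ b)$, i.e.\ swapping the contents of positions $a$ and $b$, to $\rho$ affects only the two summands indexed by the elements then occupying positions $a$ and $b$, and each such summand changes by at most $\bar\varphi(a,b)\le\varphi(a,b)$ by the triangle inequality; hence a single step changes $\Phi$ by at most $2\varphi(a,b)$. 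Telescoping over any sequence of transpositions of total cost $\dist_\varphi(\pi,\sigma)$ that carries $\pi$ to $\sigma$ then yields $D(\pi,\sigma)=\Phi(\pi)-\Phi(\sigma)\le 2\dist_\varphi(\pi,\sigma)$.

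For the upper bound $\dist_\varphi(\pi,\sigma)\le 2D(\pi,\sigma)$, I would construct an explicit sorting sequence. Since a transposition swaps the contents of two positions, sorting $\pi$ to $\sigma$ amounts to writing the position permutation $g=\sigma^{-1}\pi$ (equivalently $g^{-1}=\pi^{-1}\sigma$) as a product of transpositions; note $g$ carries the position $\pi^{-1}(i)$ of element $i$ in $\pi$ to its position $\sigma^{-1}(i)$ in $\sigma$, so $D(\pi,\sigma)=\sum_p\bar\varphi(p,g(p))$. Decompose $g$ into disjoint cycles. A cycle $(c_1\ c_2\ \cdots\ c_\ell)$ is realized by the $\ell-1$ transpositions $(c_1\ c_2),(c_2\ c_3),\dots,(c_{\ell-1}\ c_\ell)$, and each $(c_j\ c_{j+1})$ is in turn realized by the standard ``bubble'' product along the shortest path $p^*(c_j,c_{j+1})=(v_0,\dots,v_k)$, namely $(v_0 v_1)(v_1 v_2)\cdots(v_{k-1}v_k)(v_{k-2}v_{k-1})\cdots(v_0 v_1)$, in which every edge of the path is used at most twice, so its total $\varphi$-cost is at most $2\,{\rm weight}(p^*(c_j,c_{j+1}))=2\bar\varphi(c_j,c_{j+1})$. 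Concatenating these over the (pairwise commuting) cycles of $g$, the total cost is at most $2\sum_{\text{cycles}}\sum_{j=1}^{\ell-1}\bar\varphi(c_j,c_{j+1})\le 2\sum_p\bar\varphi(p,g(p))=2D(\pi,\sigma)$, where the middle step merely discards one ``wrap-around'' term $\bar\varphi(c_\ell,c_1)$ per cycle.

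The routine pieces are the per-step estimate in the lower bound and the identity that the bubble product equals the target transposition (an easy induction on $k$, base case the conjugation identity $(v_0\ v_2)=(v_0 v_1)(v_1 v_2)(v_0 v_1)$). I expect the main obstacle to be the bookkeeping in the upper bound: fixing the composition convention so that one decomposes the correct permutation ($g=\sigma^{-1}\pi$ rather than its inverse, given that transpositions act on positions), checking that $(c_1 c_2)(c_2 c_3)\cdots(c_{\ell-1}c_\ell)$ indeed composes to the $\ell$-cycle, and pairing each $\varphi$-cost term with a displacement term of $D$ so that a factor of exactly $2$, and nothing worse, is lost. If $\varphi$ is permitted to take the value $+\infty$, one first restricts attention to the finite-weight connected components of $\mathcal{H}$, the asserted bounds holding trivially once one notes that $\dist_\varphi(\pi,\sigma)=\infty$ whenever $D(\pi,\sigma)=\infty$.
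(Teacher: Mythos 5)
Your proof is correct. Note that the paper itself gives no proof of this theorem (it is explicitly omitted for space, with the underlying machinery deferred to the companion reference \cite{farnoud2012sorting}), so there is no in-paper argument to compare against; judged on its own, your outline is complete and the two halves are the right ones. The lower bound via the potential $\Phi(\rho)=\sum_i\bar\varphi(\rho^{-1}(i),\sigma^{-1}(i))$ is sound: a transposition $(a\ b)$ relocates exactly the two elements occupying positions $a$ and $b$, each affected summand changes by at most $\bar\varphi(a,b)\le\varphi(a,b)$ by the triangle inequality for the shortest-path metric (the single edge $(a,b)$ is itself a path in the complete graph $\mathcal{H}$, which gives $\bar\varphi\le\varphi$), and telescoping gives $D(\pi,\sigma)=\Phi(\pi)-\Phi(\sigma)\le 2\dist_\varphi(\pi,\sigma)$. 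The upper bound via the cycle decomposition of the position permutation, with each transposition $(c_j\ c_{j+1})$ realized by the palindromic product along $p^*(c_j,c_{j+1})$ at cost at most $2\,\mathrm{weight}(p^*(c_j,c_{j+1}))$, is also correct, and discarding the wrap-around term $\bar\varphi(c_\ell,c_1)$ per cycle only strengthens the inequality. The composition-convention issue you flag is indeed only bookkeeping: $g=\sigma^{-1}\pi$ and $g^{-1}=\pi^{-1}\sigma$ have the same cycles up to reversal, hence the same implementing transpositions and the same cost, and $D(\pi,\sigma)=\sum_p\bar\varphi(p,g(p))$ is symmetric under this reversal. Your final remark about infinite weights correctly disposes of the degenerate case. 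One could tighten the upper-bound accounting to recover the sharper constants the paper invokes in its Proposition for metric or weighted-Kendall $\varphi$, but that is not needed for the statement as given.
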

Due to space limitations, the proof is omitted.

\begin{prop}
Let $
\pi'= \arg\min_\pi \sum_{l=1}^m D(\pi,\sigma_i)$ and $\pi^*= \arg\min_\pi \sum_{l=1}^m \dist_\varphi(\pi,\sigma_i)$.
The permutation $\pi'$ is a 4-approximation to the optimal rank aggregate $\pi^*$. If $\varphi$ is a metric, or if it corresponds to weighted Kendall distance, then $\pi'$ is a 2-approximation of $\pi^*$.
\end{prop}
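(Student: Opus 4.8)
The plan is to obtain everything from the two-sided estimate of the preceding theorem, applied to each vote separately, together with the fact that $\pi'$ is by definition a minimizer of $\sum_i D(\cdot,\sigma_i)$. For a permutation $\pi$ write $C_\varphi(\pi)=\sum_{i=1}^{m}\dist_\varphi(\pi,\sigma_i)$ and $C_D(\pi)=\sum_{i=1}^{m}D(\pi,\sigma_i)$. Summing the inequalities $\tfrac12 D(\pi,\sigma_i)\le\dist_\varphi(\pi,\sigma_i)\le 2D(\pi,\sigma_i)$ over $i$ yields
\[
\tfrac12\,C_D(\pi)\ \le\ C_\varphi(\pi)\ \le\ 2\,C_D(\pi)\qquad\text{for every }\pi .
\]
The general $4$-approximation is then immediate: using the upper bound at $\pi'$, the optimality $C_D(\pi')\le C_D(\pi^*)$, and the lower bound at $\pi^*$,
\[
C_\varphi(\pi')\ \le\ 2\,C_D(\pi')\ \le\ 2\,C_D(\pi^*)\ \le\ 4\,C_\varphi(\pi^*).
\]

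For the factor-$2$ refinement, the same chain of inequalities shows that it suffices to replace $\dist_\varphi\le 2D$ by the sharper $\dist_\varphi(\pi,\sigma)\le D(\pi,\sigma)$ in the two special cases, since this gives $C_\varphi(\pi')\le C_D(\pi')\le C_D(\pi^*)\le 2\,C_\varphi(\pi^*)$. When $\varphi$ is a metric on $[n]$, the triangle inequality makes the single edge $(a,b)$ a shortest path in $\mathcal{H}$, so $D(\pi,\sigma)=\sum_i\varphi(\pi^{-1}(i),\sigma^{-1}(i))$. Putting $\rho=\sigma^{-1}\pi$ and reindexing, this equals $\sum_p\varphi(p,\rho(p))$, which splits along the cycles of $\rho$: a cycle $(p_1\,p_2\,\cdots\,p_k)$ of $\rho$ contributes $\sum_{j=1}^{k}\varphi(p_j,p_{j+1})$, with $p_{k+1}:=p_1$. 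The permutation that carries $\pi$ to $\sigma$ is $\pi^{-1}\sigma=\rho^{-1}$, which has the same cycle type as $\rho$, and a $k$-cycle on $\{p_1,\dots,p_k\}$ can be written as a product of the $k-1$ transpositions $(p_1\,p_2),(p_2\,p_3),\dots,(p_{k-1}\,p_k)$ (possibly read in reverse order), whose total weight is $\sum_{j=1}^{k-1}\varphi(p_j,p_{j+1})$, at most the above cycle contribution since one nonnegative term is omitted. Summing over cycles and using symmetry of $\varphi$ gives $\dist_\varphi(\pi,\sigma)\le D(\pi,\sigma)$, hence the $2$-approximation in the metric case.

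For the weighted Kendall weight, $\mathcal{H}$ degenerates to the path $1\!-\!2\!-\!\cdots\!-\!n$ with edge weights $w_i$, so
\[
D(\pi,\sigma)=\sum_{i=1}^{n}\ \sum_{k=\min\{\pi^{-1}(i),\sigma^{-1}(i)\}}^{\max\{\pi^{-1}(i),\sigma^{-1}(i)\}-1} w_k ,
\]
a weighted Spearman footrule, and the required bound $\dist_\varphi\le D$ is the weighted analogue of the classical inequality $K(\pi,\sigma)\le F(\pi,\sigma)$; I would establish it by a bubble-sort argument that brings one misplaced element at a time to its target position, charging each adjacent swap of positions $k,k+1$ (cost $w_k$) against the matching decrease of the footrule term, or else invoke it directly from~\cite{farnoud2012sorting}. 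The main obstacle is precisely this inequality $\dist_\varphi\le D$ with constant $1$ in the metric and weighted Kendall cases; the $4$-approximation and the reduction of the $2$-approximation to this inequality are purely formal consequences of the theorem. A secondary care point is to verify that a $k$-cycle is indeed a product of any $k-1$ transpositions whose supports form a path on its elements, and to keep straight that the cycle/path sorting must be applied to the permutation relating $\pi$ and $\sigma$, not to $\pi$ or $\sigma$ individually; both are standard.
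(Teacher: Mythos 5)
Your derivation of the $4$-approximation is exactly the paper's: sum the two-sided bound of the preceding theorem over the votes and chain $C_\varphi(\pi')\le 2C_D(\pi')\le 2C_D(\pi^*)\le 4C_\varphi(\pi^*)$ using the optimality of $\pi'$ for $C_D$. For the factor-$2$ refinement the paper gives no argument at all --- it says only that the claim ``may be proved similarly, by referring to the results of~\cite{farnoud2012sorting}'' --- so here you go beyond the source. Your reduction is the right one: keep the lower bound $D\le 2\dist_\varphi$ and sharpen the upper bound to $\dist_\varphi\le D$. Your cycle-decomposition proof of $\dist_\varphi\le D$ in the metric case is correct and complete (a $k$-cycle of $\pi^{-1}\sigma$ is a product of $k-1$ transpositions along the cycle, whose total weight is the closed-walk weight counted by $D$ minus one nonnegative edge). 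The one soft spot is the weighted Kendall case: the ``charge each adjacent swap against the matching decrease of the footrule term'' scheme does not work element by element as described, because inserting the element destined for position $1$ displaces elements that may already be at, or to the left of, their targets; for $\pi=[3,2,1]$, $\sigma=e$ insertion sort pays $w_1+2w_2$ while element $2$ contributes $0$ to $D$, so its swap cannot be charged to its own footrule term. The inequality $\dist_\varphi\le D$ is still true there, but it needs a global accounting rather than a local charge; since you also offer to invoke it directly from~\cite{farnoud2012sorting}, you end up relying on exactly the same external result the paper does, and the overall argument stands.
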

\begin{proof} The first part of the proof follows from the simple observation that
\[\sum_{l=1}^m \dist_\varphi(\pi^*,\sigma_i) \ge (1/2)\sum_{l=1}^m D(\pi^*,\sigma_i)\]
and
\[\sum_{l=1}^m \dist_\varphi(\pi',\sigma_i) \le 2\sum_{l=1}^m D(\pi',\sigma_i).\]
So, by optimality of $\pi'$ with respect to $D$,
\[\sum_{l=1}^m \dist_\varphi(\pi',\sigma_i) \le 4\sum_{l=1}^m \dist_\varphi(\pi^*,\sigma_i)\] and thus $\pi'$ provides a 4-approximation.
The other claim may be proved similarly, by referring to the results of~\cite{farnoud2012sorting}.
\end{proof}

The permutation $\pi'$ can be obtained using minimum weight bipartite matching by letting
\[f(i,j) = {\rm{weight}} \left(p^*(i,j)\right).\]
In particular, for weighted Kendall distance, we let
\[f(i,j)=\sum_{l=i}^{j-1}\varphi(l,l+1).\]

A simple approach for improving the performance of matching based algorithms is to couple them with local descent methods. More specifically, the local descent method works as follows.
Assume that an estimate of the aggregate at step $\ell$ equals $\pi^{(\ell)}$. Let $\Theta_a=\{{(k \,k+1): k=1,\ldots,n-1\}}$ be the set of all adjacent transpositions. Then
\[ \pi^{(\ell+1)}= \pi^{(\ell)} \, \arg\min_{\tau \in\Theta_a} \sum_{i=1}^{m}\dist(\pi^{(\ell)}\,\tau,\sigma_{i}).\]
The search terminates when the cumulative distance of the aggregate from the set $\Sigma$ cannot be decreased further. We choose the starting point $\pi^{(0)}$ to be the ranking $\pi'$ obtained by the minimum weight bipartite matching algorithm. This method will henceforth be referred to as Bipartite Matching with Local Search (BMLS).

\subsection{Vote Aggregation using PageRank}

For a ranking $\pi\in\mathbb{S}_{n}$ and $a,b\in[n]$, $\pi$ is
said to \emph{rank} $a$ \emph{before} $b$ if $\pi^{-1}(a)<\pi^{-1}(b)$.
We denote this relationship with $a<_{\pi}b$. The notation $a\leq_{\pi}b$ has a
similar meaning, and is used in the case that one allows $b=a$.

An algorithm for rank aggregation based on PageRank and HITS algorithms for ranking web pages was proposed in~\cite{dwork2001rank}. PageRank
is one of the most important algorithms developed for search engines used by Google, with the aim of scoring web-pages based on their relevance. Each webpage that has hyperlinks to other
webpages is considered as a voter, while the voter's preferences for candidates is expressed via the hyperlinks. When a hyperlink to a webpage is not present, it is assumed that the voter does not support
the given candidate's webpage.
The ranking of the webpages is obtained by computing the equilibrium distribution of the chain, and ordering the pages according to the values of
their probabilities. The connectivity of the Markov chain provides transitive information about pairwise candidate preferences, and states with high input probability correspond to candidates ranked
highly in a large number of lists.

This idea can be easily adapted to the rank aggregation scenario in several different settings. In such an adaptation,
it is assumed that the states of the Markov chain correspond to the candidates to be voted on
and that the transition probabilities are functions of the votes. Dwork et al.~\cite{dwork2001rank,dwork2001rank-web} proposed four different ways for computing the transition probabilities from the votes.
For completeness, we briefly describe the methods below before we proceed to describe a new approach for evaluating the transition probabilities for the case of the weighted Kendall  distance.

Let $ P $ denote the state transition probability matrix of the chain, with $ P_{ij} $ denoting the probability of going from state (candidate) $i$ to state $j$. Furthermore, let
\[
\alpha_{ij}(\sigma)=\begin{cases}
1, & \qquad\mbox{if }j\le_{\sigma}i,\\
0, & \qquad\mbox{otherwise }
\end{cases}
\] and \[\alpha_{ij}=\sum_{\sigma\in \Sigma} \alpha_{ij}(\sigma).\] That is, $\alpha_{ij}$ is the number of voters that ranked candidate $j$ at least as high as candidate $i$.

In the first case (Case 1), the transition probabilities are computed according to
\[P_{ij}=\frac{I(\alpha_{ij}>0)}{\sum_k I(\alpha_{ik}>0)},\]
where $I(x>0)$ equals 1 if $x>0$ and equals 0 otherwise.
In the second scenario (Case 2) the probabilities are set to
 \[P_{ij} = \frac1m \sum_{\sigma \in \Sigma} P_{ij}(\sigma)\] with
\[P_{ij}(\sigma) = \frac{\alpha_{ij}(\sigma)}{\sum_k \alpha_{ik}(\sigma)}.\]

For Case 3, the transition probabilities are evaluated as
 \[P_{ij} = \frac1m \sum_{\sigma \in \Sigma} P_{ij}(\sigma)\] with
$P_{ij}(\sigma) = \frac{\alpha_{ij}(\sigma)}{n}$ for $j<_{\sigma}i$  and
$P_{ii}(\sigma) = 1-\frac{\sum_j\alpha_{ij}(\sigma)}{n}$.
The fourth method (Case 4) differs from the previous methods in so far that it uses transition probabilities based on majority votes, and will not be used in our subsequent studies.

Our Markov chain model for weighted Kendall distance is similar to Case 3, with a major modification that includes incorporating transposition weights into the transition probabilities.
To accomplish this task, we proceed as follows.

Let $w_k=\varphi(k,k+1)$, and let $i_\sigma = \sigma^{-1}(i)$ for candidate $i,i=1,\cdots,n$. Note that $i_\sigma>j_\sigma$ if and only if $i>_\sigma j$. For $l>k$, let \[w(k:l)=\sum_{h=k}^{l-1}w_h\] denote the sum of the weights of transpositions $(k\ k+1),(k+1\ k+2),\cdots,(l-1\ l)$. We set
\begin{equation}
\beta_{ij}(\sigma) = \max_{l:j_\sigma\le l < i_\sigma} \frac{w(l:i_\sigma)}{ i_\sigma-l }
\label{eq:beta}
\end{equation}
if $j_\sigma<i_\sigma$, $\beta_{ij}(\sigma)=0$ if $j_\sigma>i_\sigma$, and
\[\beta_{ii}(\sigma) = \sum_{k:k_\sigma>i_\sigma}\beta_{ki}(\sigma).\]

The transition probabilities equal
\[P_{ij} = \frac1m\sum_{k=1}^m P_{ij}(\sigma_k),\] with
\[P_{ij}(\sigma) = \frac{\beta_{ij}(\sigma)}{\sum_k \beta_{ik}(\sigma)}.\]

Intuitively, the transition probabilities described above may be interpreted in the following manner.
The transition probabilities are obtained by averaging the transitions probabilities corresponding to individual votes $\sigma\in\Sigma$.
For each vote $\sigma$, let us first consider the case $ j_\sigma = i_\sigma -1$. In this case, the probability of going from candidate $i$
to candidate $j$ is proportional to $w_{j_\sigma}=\varphi(j_\sigma,i_\sigma)$. This implies that if $w_{j_\sigma}>0$, one moves from candidate $i$ to candidate $j$ with positive
probability. Furthermore, larger values for $w_{j_\sigma}$ result in higher probabilities for moving from $i$ to $j$.

Next, consider a candidate $k$ with $ k_\sigma = i_\sigma -2$. In this case, it seems reasonable to let the probability of transitioning from candidate $i$ to candidate $k$ be proportional to
$\frac{w_{j_\sigma}+w_{k_\sigma}}{2}$. However, since $k$ is ranked before $j$ by vote $\sigma$, it is natural to require that the probability of moving to candidate $k$ from candidate $i$
is at least as high as the probability of moving to candidate $j$ from candidate $i$. This reasoning leads to $\beta_{ik}=\max\{w_{j_\sigma},\frac{w_{j_\sigma}+w_{k_\sigma}}{2}\}$ and motivates using
the maximum in \eqref{eq:beta}. Finally, the probability of staying with candidate $i$ is proportional to the sum of the $\beta$'s from candidates placed below candidate $i$.

\section{Results} \label{sec:results}

The performance of the Markov chain approaches described above cannot be evaluated analytically.
A common approach when dealing with heuristic methods for hard combinatorial optimization problems is to test the
performance of the scheme on examples for which the optimal solutions are easy to evaluate in closed form.

In what follows, we evaluate the various aggregation approaches on a simple test example, with
$m=11$. The set of votes (rankings) $\Sigma$ is given in matrix form by
\[
\left(\begin{array}{ccccccccccc}
1 & 1 & 1 & 2 & 2 & 3 & 3 & 4 & 4 & 5 & 5\\
2 & 2 & 2 & 3 & 3 & 2 & 2 & 2 & 2 & 2 & 2\\
3 & 3 & 3 & 4 & 4 & 4 & 4 & 5 & 5 & 3 & 3\\
4 & 4 & 4 & 5 & 5 & 5 & 5 & 3 & 3 & 4 & 4\\
5 & 5 & 5 & 1 & 1 & 1 & 1 & 1 & 1 & 1 & 1
\end{array}\right).
\]
Here, each column corresponds to a vote, e.g., $\sigma_{1}=\left[1,2,3,4,5\right]$.
Let us consider candidates 1 and 2. Using a plurality rule, one would arrive
at the conclusion that candidate 1 should be the winner, given that 1 appears most often at the top of the list.
Under a number of other aggregation rules, including Kendall's $\tau$ and Borda's method,  candidate 2 would
be the winner.

\begin{center}
\begin{table*}
\begin{centering}
\begin{tabular}{|c|c|c|c|c|}
\hline
\multirow{2}{*}{Method} & \multicolumn{4}{c|}{Aggregate ranking and average distance}\tabularnewline
\cline{2-5}
 & $w=\left[1,0,0,0\right]$ & $w=[1,1,1,1]$& $w=\left[1,1,0,0\right]$ & $w=\left[0,1,0,0\right]$ \tabularnewline
\hline
\hline
OPT & $\left[\underline{1},4,3,2,5\right]$, 0.7273 & $\left[2,3,4,5,1\right],$ 2.3636 & $\left[\underline{2,3},4,5,1\right]$, 1.455 & $\left[\underline{3,2},5,4,1\right]$, 0.636 \tabularnewline
\hline
BMLS & $\left[\underline{1},2,3,4,5\right]$, 0.7273 & $\left[2,3,4,5,1\right],$ 2.3636 & $\left[\underline{2,3},1,5,4\right]$, 1.455 & $\left[\underline{2,3},1,5,4\right]$, 0.636 \tabularnewline
\hline
MC & $\left[\underline{1},2,5,4,3\right]$, 0.7273 & $\left[2,3,4,5,1\right],$ 2.3636 & $\left[\underline{2,1},3,4,5\right]$, 1.546& $\left[\underline{2,3},1,4,5\right]$, 0.636 \tabularnewline
\hline
\end{tabular}
\par\end{centering}
\centering{}\caption{The aggregate rankings and the average distance of the aggregate ranking
from the votes for different weight functions $w$.}\label{tab1}
\end{table*}
\par\end{center}

\begin{center}
\begin{table*}
\begin{centering}
\begin{tabular}{|c|c|c|c|}
\hline
Group & Method & Aggregate Ranking & Average Distance\tabularnewline
\hline
\hline
\multirow{2}{*}{Graduate (28)} & BMLS & \textbf{10}, 12, 9, 8, 1, 3, 2, 11, 7, \textbf{4}, 6, 5 & 5.0918\tabularnewline
\cline{2-4}
& MC & \textbf{10},12, 9, 8, 1, 11, 3, 2, 7, 5, \textbf{4}, 6 & 5.1087\tabularnewline
\hline
\multirow{2}{*}{Undergrad (73)} & BMLS & 12, 9, 8, 1, 3, \textbf{10}, \textbf{4}, 2, 11, 7, 5, 6 & 5.4044\tabularnewline
\cline{2-4}
& MC & 12, 9, 8, 1, 3, \textbf{10}, \textbf{4}, 7, 2, 11, 5, 6 & 5.4106\tabularnewline
\hline
\end{tabular}
\par\end{centering}
\centering{}\caption{Aggregate rankings for undergraduate and graduate students.}\label{tab:Grad-Undergrad}
\end{table*}
\par\end{center}

\begin{center}
\begin{table*}
\begin{centering}
\begin{tabular}{|c|c|c|c|}
\hline
Group & Method & Aggregate Ranking & Average Distance\tabularnewline
\hline
\hline
\multirow{2}{*}{Female, Undergrad (32)} & BMLS & 12, 9, 1, \textbf{8}, 3, 7, 4, 10, 2, 5, \textbf{11}, 6 & 5.3218\tabularnewline
\cline{2-4}
& MC & 12, 9, 8, 1, 3, 10, 7, 2, 5, 4, 11, 6 & 5.3634 \tabularnewline
\hline
\multirow{2}{*}{Male, Undergrad (31)} & BMLS & 12, 9, \textbf{8}, 3, 1, 10, \textbf{11}, 7, 4, 2, 5, 6 & 5.3457 \tabularnewline
\cline{2-4}
& MC & 12, 9, 8, 10, 1, 3, 11, 2, 7, 4, 5, 6 & 5.421 \tabularnewline
\hline
\multirow{2}{*}{DNI, Undergrad (10)} & BMLS & \textbf{8}, 12, 4, 1, 3, 9, 7, 2, 10, \textbf{11}, 6, 5 & 4.2796\tabularnewline
\cline{2-4}
& MC & 12, 8, 4, 1, 3, 9, 10, 11, 7, 2, 6, 5 & 4.4338\tabularnewline
\hline
\end{tabular}

\par\end{centering}

\centering{}\caption{Aggregate rankings for female and male students.}\label{tab:Undergrads}
\end{table*}
\par\end{center}

Our goal is to see how the distance based rank aggregation algorithms would position these two candidates. The numerical results regarding this example are presented in Table \ref{tab1}. In the tables, OPT refers to the optimum solution which was found by exhaustive search and MC refers to the Markov chain method. Furthermore, minimum weight bipartite matching is obtained using \cite{Cao:2011fk}.

If the weight function is $w^{(1)}=[1,0,0,0]$, the optimal aggregate vote clearly corresponds to the plurality winner.
That is, the winner is the candidate with most
voters ranking him/her as the top candidate. A quick check of Table \ref{tab1} reveals that
all three methods identify the winner correctly. Note that the ranks
of candidates other than candidate 1 obtained by the different methods
are different, however this does not affect the distance between the aggregate
ranking and the votes.

The next weight function that we consider is the uniform weight function,
$w^{(u)}=\left[1,1,1,1\right]$. This weight function corresponds to
the conventional Kendall's $\tau$ distance. As shown in Table \ref{tab1},
all three methods produce $\left[2,3,4,5,1\right],$ and the aggregates
returned by BMLS and MC are optimum.

The weight function $w^{(2)}=\left[1,1,0,0\right]$ corresponds
to \emph{ranking of the top 2} candidates. OPT and BMLS return
$2,3$ as the top two candidates, while both preferring $2$ to $3$. The MC method,
however, returns $2,1$ as the top two candidates, with a preference to $2$ over $1$, and
a suboptimal cumulative distance.
It should
be noted that the the MC method is not designed to only minimize the average distance.
Another important factor in determining the winners via the MC method is
that ``winning against strong candidates makes one strong''. In this example,
candidate 1 beats the strongest candidate, candidate 2, three times,
while candidate 3 beats candidate 2 only twice and this fact seems
to be the reason for the MC algorithm to prefer candidate 1 to candidate 3. Nevertheless,
the equilibrium probabilities of candidates 1 and 3 obtained by the MC method are
very close to each other, as the vector of probabilities is $[\underline{0.137},0.555,\underline{0.132},0.0883,0.0877]$.

The weight function, $w^{(t2)}=[0,1,0,0]$, corresponds to \emph{identifying
the top 2} candidates (it is not important which candidate is the
first and which is the second.) The OPT and BMLS identify $\left\{ 2,3\right\} $
as the top two candidates. The MC method returns the stationary probabilities
$\left[0,1,0,0,0\right]$ which means that candidate 2 is an absorbing
state in the Markov chain. This occurs because candidate 2 is ranked
first or second by all voters. The existence of absorbing
states is a drawback of Markov chain methods.
One solution is to remove
2 from the votes and reapply MC.
The MC method in this case
results in the stationary distribution $\left[p\left(1\right),p\left(3\right),p\left(4\right),p\left(5\right)\right]=\left[0.273,0.364,0.182,0.182\right],$
which gives us the ranking $\left[3,1,4,5\right]$. Together with
the fact that candidate 2 is the strongest candidate, we obtain the
ranking $\left[2,3,1,4,5\right]$.

Equipped with this insight, we now perform an aggregation study on a set of rankings collected from UIUC undergraduate and graduate students,
pertaining to criteria for the quality of academic experience (University Climate Study Data), listed below.
The weight function $w=[w_1,\cdots,w_{n-1}]$ was chosen as $w_i=(3/4)^{i-1},i=1,\cdots,n-1$.
\begin{enumerate}
\item Campus friendliness and inclusiveness
\item Availability of recreational and cultural facilities
\item Quality of classrooms and dorms
\item Extracurricular student groups and activities
\item Geographical proximity to family/partner
\item Commitment of campus to build a diverse community
\item Being able to express one's personal identity freely
\item Being able to make friends on campus
\item Safety and security
\item Availability of financial support/scholarship
\item Availability of personal counseling/academic tutoring
\item Friendliness/quality of faculty/instructors
\end{enumerate}

The results of the vote aggregation are presented in Tables \ref{tab:Grad-Undergrad} and \ref{tab:Undergrads}. In Table~\ref{tab:Undergrads}, a group of 10 students who Did Not Indicate their sex is referred to as DNI.
An interesting finding is that the most important criteria for undergraduate students is the effectiveness and friendliness of instructors.

\textbf{Acknowledgment}:  The authors are grateful to Tzu-Yueh Tseng for helping with
the numerical simulations and to Eitan Yaakobi for useful discussions.
The work was supported by the NSF grants  CCF 0821910, CCF 0809895, and CCF 0939370.

\bibliographystyle{ieeetr}
\bibliography{bib}

\end{document}